\newtheoremstyle{slplain}
 {.5\baselineskip\@plus.2\baselineskip\@minus.2\baselineskip}
  {.5\baselineskip\@plus.2\baselineskip\@minus.2\baselineskip}
  {\slshape}
  {}
  {\bfseries}
  {.}
  { }
  {}
\theoremstyle{definition}
\theoremstyle{slplain}
\theoremstyle{slplain}
\newtheorem{thm}{Theorem} 
\newtheorem{cor}{Corollary}
\newcommand{\cat}{^\vee \!}
\newcommand{\pitcher}{^\wedge \!}
\newcommand{\catk}{\cat\,^{k}}
\newcommand{\catm}{\cat\,^{m}}
\newcommand{\catZero}{\cat\,^{0}}
\newcommand{\pitcherk}{\pitcher\,^{k}}
\newcommand{\pitcherZero}{\pitcher\,^{0}}
\newcommand{\match}{\mbox{\bf Match}}
\newcommand{\cisM}{\mbox{\bf cis}^{*}} 
\newcommand{\transM}{\mbox{\bf trans}^{*}} 
\newcommand{\metaM}{\mbox{\bf meta}^{*}} 
\newcommand{\fiadrA}{\cat\alpha} 
\newcommand{\foadrB}{\pitcher\beta}
\newcommand{\adrA}{\bf a}
\newcommand{\adrB}{\bf b}
\newcommand{\adrC}{\bf c}
\newcommand{\adrD}{\bf d}
\newcommand{\adrE}{\bf e}
\newcommand{\adrV}{\bf v}
\newcommand{\adrX}{\bf x}
\newcommand{\adrY}{\bf y}
\newcommand{\nodeA}{\mathcal A}
\newcommand{\nodeB}{\mathcal{B}}
\newcommand{\nodeD}{\mathcal{D}}
\newcommand{\nodeV}{\mathcal{V}}
\newcommand{\tfg}{{^\wedge \!}\mbox{{\sf tf}}} 
\newcommand{\tfc}{{^\vee \!}\mbox{{\sf tf}}}
\newcommand{\tfp}{{^\circ \!}\mbox{{\sf tf}}}
\definecolor{currentcolor}{rgb}{0.8 0.4 0.2}
\tikzstyle{stochasticjumpstyle}=[diamond,draw,fill=white,>=latex,>->,dashed]
\tikzstyle{stochasticPathstyle}=[>=latex,>->,dashed]
\tikzstyle{stochasticNodestyle}=[ellipse,inner sep=1pt,text=.,fill=.!20]
\tikzstyle{blankstyle}=[fill=white,inner sep=1pt]
\def\SnakeSegLen{0.6em}
\def\SnakeAmp{0.11em}
\def\PrePostLen{5mm}
\tikzstyle{sendstyle}=[dashed,line width=1.1pt]
\tikzstyle{splitstyle}=[circle,draw]
\tikzstyle{receivestyle}=[>->,line width=1.1pt,decorate, decoration={zigzag,segment length=\SnakeSegLen, amplitude=\SnakeAmp, pre length=\PrePostLen, post=curveto, post length=\PrePostLen},text=black]
\tikzstyle{receivesigstyle}=[draw,inner sep=2pt,fill=pink!20]
\tikzstyle{receivesigstyle3}=[draw,inner sep=2pt, fill=white]
\tikzstyle{receivesigstyle2}=[ellipse,shade, draw,double,fill=red!10]
\tikzstyle{sendsigstyle}=[diamond,draw,inner sep=1pt, text=black, fill=yellow!80]
\tikzstyle{sendsigstyle3}=[circle,draw, ball color=white]
\tikzstyle{sendsigstyle2}=[diamond,draw,double, inner sep=1pt, fill=white]
\tikzstyle{snakesendstyle}=[*->, decorate, decoration={snake, segment length=\SnakeSegLen, amplitude=\SnakeAmp,  pre length=\PrePostLen, post=curveto, post length=\PrePostLen}]
\tikzstyle{snakesendstyle1}=[line width=1.1pt, decorate, decoration={snake,segment length=\SnakeSegLen, amplitude=\SnakeAmp}]
\tikzstyle{snakesendstyle3}=[decorate, decoration={markings, mark=at position .75 with {\arrow[red,line width=5mm]{>}}, snake, segment length=\SnakeSegLen, amplitude=\SnakeAmp,  pre length=\PrePostLen, post=curveto, post length=\PrePostLen}]
\tikzstyle{snakesendstyle2}=[decorate, decoration={ zigzag,segment length=\SnakeSegLen, amplitude=\SnakeAmp, line around/.style={decoration={pre length=\PrePostLen,post length=\PrePostLen}}}]
\newcounter{foo}
\colorlet{anglecolor}{green!50!black}
\definecolor{darkgreen}{rgb}{0 0.6  0}
\definecolor{turquoise}{rgb}{0 0.41 0.41}
\definecolor{rouge}{rgb}{0.79 0.0 0.1}
\definecolor{vert}{rgb}{0.15 0.4 0.1}
\definecolor{mauve}{rgb}{0.6 0.4 0.8}
\definecolor{violet}{rgb}{0.58 0. 0.41}
\definecolor{orange}{rgb}{0.8 0.4 0.2}
\definecolor{bleu}{rgb}{0.39, 0.58, 0.93}
\definecolor{darkross}{rgb}{0.008,0.412,0.471}
\definecolor{middleross}{rgb}{0.012,0.580,0.663}
\definecolor{lightross}{rgb}{0.016,0.749,0.855}
\definecolor{darkblue}{rgb}{0.067,0.008,0.471}
\definecolor{middleblue}{rgb}{0.094,0.012,0.663}
\definecolor{lightblue}{rgb}{0.122,0.016,0.855}
\definecolor{darkpurple}{rgb}{0.471,0.008,0.412}
\definecolor{middlepurple}{rgb}{0.663,0.012,0.580}
\definecolor{lightpurple}{rgb}{0.855,0.016,0.749}
\definecolor{darkbrown}{rgb}{0.471,0.067,0.008}
\definecolor{middlebrown}{rgb}{0.663,0.094,0.012}
\definecolor{lightbrown}{rgb}{0.855,0.122,0.016}
\definecolor{darkolive}{rgb}{0.412,0.471,0.008}
\definecolor{middleolive}{rgb}{0.580,0.663,0.012}
\definecolor{lightolive}{rgb}{0.749,0.855,0.016}
\definecolor{darkgreen}{rgb}{0.008,0.417,0.067}
\definecolor{middlegreen}{rgb}{0.012,0.663,0.094}
\definecolor{lightgreen}{rgb}{0.016,0.855,0.122}
\definecolor{darkocre}{rgb}{0.471,0.298,0.008}
\definecolor{middleocre}{rgb}{0.663,0.420,0.012}
\definecolor{lightocre}{rgb}{0.855,0.541,0.016}
    \definecolor{lightblue}{rgb}{0,0,.7}
    \definecolor{orange}{rgb}{1,.7,0}
    \definecolor{darkorange}{rgb}{1,.4,0}
    \definecolor{darkgreen}{rgb}{0,.5,0}
    \definecolor{darkblue}{rgb}{0,0,.4}
    \definecolor{darkred}{rgb}{.4,0,0}
    \definecolor{gray}{rgb}{.2,.2,.2}
    \definecolor{darkgray}{rgb}{.2,.2,.2}
    \definecolor{shadecolor}{gray}{0.925}
\definecolor{darkred}{rgb}{0.65,0,0}
\definecolor{darkblue}{rgb}{0,0,.65}
\definecolor{darkgreen}{rgb}{0,0.5,0}
\definecolor{orange}{rgb}{1,.75,.25}
\definecolor{aqua}{rgb}{0,.25,.75}
\definecolor{grey}{rgb}{.5,.5,.5}
\definecolor{brown}{rgb}{.51,.35,.18}
\definecolor{lightblue}{rgb}{.3,.5,1}
\definecolor{orange}{rgb}{1,.7,0}
\definecolor{darkorange}{rgb}{1,.4,0}
\definecolor{darkgreen}{rgb}{0,.4,0}
\definecolor{darkblue}{rgb}{0,0,.4}
\definecolor{darkred}{rgb}{.56,0,0}
\definecolor{gray}{rgb}{.3,.3,.3}
\definecolor{darkgray}{rgb}{.2,.2,.2}
\definecolor{blue}{rgb}{0,0,1}
\definecolor{red}{rgb}{1,0,0}
\definecolor{pink}{rgb}{.933,0,.933}
\definecolor{green}{rgb}{0.133,0.545,0.133}
\definecolor{shadecolor}{gray}{0.925}
\definecolor{DarkBlue}{rgb}{0.000,0.000,0.545}
\definecolor{DarkChocolate}{rgb}{0.400,0.200,0.000}
\definecolor{DarkCyan}{rgb}{0.000,0.545,0.545}
\definecolor{DarkGoldenrod}{rgb}{0.720,0.525,0.044}
\definecolor{DarkGray}{rgb}{0.664,0.664,0.664}
\definecolor{DarkGreen}{rgb}{0.000,0.392,0.000}
\definecolor{DarkGrey}{rgb}{0.664,0.664,0.664}
\definecolor{DarkKhaki}{rgb}{0.740,0.716,0.420}
\definecolor{DarkLavender}{rgb}{0.400,0.200,0.600}
\definecolor{DarkMagenta}{rgb}{0.545,0.000,0.545}
\definecolor{DarkOliveGreen}{rgb}{0.332,0.420,0.185}
\definecolor{DarkOrange}{rgb}{1.000,0.550,0.000}
\definecolor{DarkOrchid}{rgb}{0.600,0.196,0.800}
\definecolor{DarkPeriwinkle}{rgb}{0.400,0.400,1.000}
\definecolor{DarkPurpleBlue}{rgb}{0.400,0.000,0.800}
\definecolor{DarkRed}{rgb}{0.545,0.000,0.000}
\definecolor{DarkRoyalBlue}{rgb}{0.000,0.200,0.800}
\definecolor{DarkSalmon}{rgb}{0.912,0.590,0.480}
\definecolor{DarkSeaGreen}{rgb}{0.560,0.736,0.560}
\definecolor{DarkSlateBlue}{rgb}{0.284,0.240,0.545}
\definecolor{DarkSlateGray}{rgb}{0.185,0.310,0.310}
\definecolor{DarkSlateGrey}{rgb}{0.185,0.310,0.310}
\definecolor{DarkSmoke}{rgb}{0.920,0.920,0.920}
\definecolor{DarkTurquoise}{rgb}{0.000,0.808,0.820}
\definecolor{DarkViolet}{rgb}{0.580,0.000,0.828}
\definecolor{DeepPink}{rgb}{1.000,0.080,0.576}
\definecolor{DeepSkyBlue}{rgb}{0.000,0.750,1.000}
\tikzstyle{mystyle}=[scale= \PicSize,  
\def\PicSize{0.5} 
\def\nexttoPicSize2{6.0cm}
\numberwithin{equation}{section}
\begin{document}

\title{\vspace{-7ex}Combinatorial Limits of Transcription Factors and Gene Regulatory Networks in Development and Evolution}

\author{Eric Werner \thanks{Balliol Graduate Centre, Oxford Advanced Research Foundation (http://oarf.org).  
\copyright Eric Werner 2015.  All rights reserved. }\\ \\
University of Oxford\\
Department of Physiology, Anatomy and Genetics, \\
and Department of Computer Science, \\
Le Gros Clark Building, 
South Parks Road, 
Oxford OX1 3QX  \\
email:  eric.werner@dpag.ox.ac.uk\\
}

\date{ } 
\maketitle
\thispagestyle{empty}

\begin{center}
\textbf{Abstract}
\begin{quote}
\it \small

Gene Regulatory Networks (GRNs) consisting of combinations of transcription factors (TFs)  and their cis promoters are assumed to be sufficient to direct the development of organisms.  Mutations in GRNs are assumed to be the primary drivers for the evolution of multicellular life.  Here it is proven that neither of these assumptions is correct.  They are inconsistent with fundamental principles of combinatorics of bounded encoded networks.   It is shown there are inherent complexity and control capacity limits for any gene regulatory network that is based solely on protein coding genes such as transcription factors.  This result has significant practical consequences for understanding development, evolution, the Cambrian Explosion, as well as multi-cellular diseases such as cancer.   If the arguments are sound, then genes cannot explain the development of complex multicellular organisms and genes cannot explain the evolution of complex multicellular life. 
\end{quote}
\end{center}

{\bf Key words}: {\sf \footnotesize  Transcription factors, gene regulatory network (GRN),  addressing systems, addressing networks, genome control architecture, developmental control networks, CENES, CENOME, interpretive-executive system, multicellular development, embryogenesis, evolution, Cambrian Explosion, combinatorics, metazoans, multicellular life, evolutionary capacity of networks.}

\pagebreak
\maketitle
\tableofcontents

\pagebreak

\section{Introduction}
Gene Regulatory Networks (GRNs) consisting of combinations of transcription factors (TFs)  and their cis promoters are assumed to be sufficient to direct the development of organisms.  Mutations in GRNs are assumed to be the primary drivers for the evolution of multicellular life.  Here it is proven that neither of these assumptions is correct.  They are inconsistent with fundamental principles of combinatorics of bounded encoded networks.   It is shown there are inherent complexity and control capacity limits for any gene regulatory network that is based solely on protein coding genes such as transcription factors.  This result has significant practical consequences for understanding development, evolution, the Cambrian Explosion, as well as multi-cellular diseases such as cancer.   If the arguments are sound, then genes cannot explain the development of complex multicellular organisms and genes cannot explain the evolution of complex multicellular life\footnote{This paper gives a more formal proof of the informal proof given in (Werner, E., "What Transcription Factors Can't Do: On the Combinatorial Limits of Gene Regulatory Networks" arXiv:1312.5565 [q-bio.MN], 2013.) However, the concepts and arguments are just as valid in the informal proof as in this more formal version. Even though the paper is still rough and somewhat incomplete, I put  this out there for feedback from the life science, mathematics, and, more generally, the science communities. }

\section{Addressing  networks}

An   {\em addressing network} $N$ is an address-based network that consists of a set of nodes with addresses. The addresses define the network's edges or links when addresses of two nodes match.  Formally, an addressing network is a tuple $N = (\mathbb{D, I, O, \match, X})$ where $\mathbb{D}$ is a set of {\em nodes}. $\mathbb{I}$ is a set of {\em unitary In-addresses}. $\mathbb{O}$ is a set of {\em unitary Out-addresses}.   $\match\subseteq (\mathbb{O} \times \mathbb{I})$ is a {\em matching relation} between unitary Out-addresses and unitary In-addresses.  $\mathbb{X}$ is a set of {\em actions}.  Unitary addresses are considered primitive, indivisible units that combine to form address combinations.  {\em Unitary In-addresses} are denoted by lower case letters, with or without subscripts, and an inverted wedge prefix: $\cat\adrA_{1}, \dots, \cat\adrA_{m}$.  {\em Unitary Out-addresses} are denoted by lower case letters with a wedge prefix: $\pitcher\adrB_{1}, \dots, \pitcher\adrB_{k}$.  

An {\em address combination}  is a  sequence of zero or more unitary addresses.  
An {\em In-address combination} denoted by Greek letters with an inverted wedge prefix, e.g.,  $\cat\alpha =\, \cat^{m}\alpha =\, \cat\adrA_{1}, \dots, \cat\adrA_{m}$ is a sequence of zero or more unitary In-addresses.  The superscript $m$ denotes the length of the address combination.  
An {\em Out-address combination}, denoted by Greek letters with a wedge prefix, is a  sequence of zero or more unitary Out-addresses: $\pitcher\beta =\, \pitcher^{k}\beta =\, \pitcher\adrB_{1}, \dots, \pitcher\adrB_{k}$.  The superscript $k$ denotes the length of the address sequence.   

Each node $\nodeA$ in an addressing network $N$ has at least one In-address combination and one Out-address combination. The general form a node $\nodeA =\, \fiadrA_{1}\ldots\fiadrA_{i},X,\foadrB_{1},\dots, \foadrB_{o}$.  The {\em full  In-address} ({\em full  Out-address}) of a node is the sequence of unitary In-addresses (unitary Out-addresses) gotten by stringing together the In-address (Out-address) combinations of a node.  While the distinction between address combinations and their full  counterparts is useful for describing the general topology of addressing networks, in this article we use the full  description for the In- and Out-addresses of nodes. 

Let $\cat\nodeA$ denote the full {\em  In-address} of node $\nodeA$ which consists of a sequence of $m$ unitary addresses $\cat\nodeA =\, \cat^{m}\nodeA =\, (\cat\adrA_{1} \ldots \cat\adrA_{m})$ where the $\cat\adrA_{i}$ in $\cat\nodeA$ are the unitary In-addresses.  Let $\pitcher(\nodeA) =\, \pitcher\nodeA =\, \pitcher\,^{k}\nodeA$ denote a full {\em Out-address} of node $\nodeA$ which consists of a sequence of $k$ unitary Out-addresses:  $\pitcher\nodeA =\, (\pitcher\adrB_{1} \ldots \pitcher\adrB_{k})$.  The number of unitary addresses in a full address called the {\em address length} or {\em address size}.  A node may have more than one full Out-address.  

A unitary Out-address $\pitcher\adrA$ {\em matches}  a unitary In-address $\cat\adrB$ if $(\pitcher\adrA, \cat\adrB) \in \match$, i.e., if   $\match(\pitcher\adrA, \cat\adrB)$ holds.  
The matching relation $\match$ is specified externally by an {\em interpretive-executive system} (IES) that interprets and executes the network $N$.   Thus, addressing networks are executable networks that are interpreted and executed by some external system we call the IES.  Examples of addressing networks include mobile and non-mobile telephone systems, the Internet, the postal delivery service, and, as we shall see, gene regulatory networks (GRNs).  

Let $X \in \mathbb{X}$ denote some action directive.   A node $\nodeA$ in a addressing network $N$ has the general form:  $(\cat\nodeA, X, \pitcher\nodeA )$.  Ignoring the action $X$ component, a node $\nodeA$ with $m$ unitary In-addresses $\cat^{m}\nodeA $ and $k$ unitary Out-addresses $\pitcher^{k}\nodeA$ is denoted variously as  $(\cat\adrA_{1}\ldots\cat\adrA_{m}|\pitcher\adrB_{1}\ldots\pitcher\adrB_{k}) =\, \cat\,^{m} \nodeA\pitcher\,^{k} =\, ^{m}\nodeA^{k}$.  Note, since In- and Out-addresses are sequences and not sets of unitary addresses, two nodes with the same action and same In- and Out-addresses need not be identical. 
 
An {\em Out-node} in a network is any node with at least one unitary Out-address. An {\em In-node} is any node with at least one unitary In-address.  

A {\em unitary directed link} $(\cat\adrA\, \rightarrow\, \pitcher\adrB)$ is formed from node $\nodeA$ to node $\nodeB$ in network $N$ if $\exists\,\pitcher\adrA_{i} \in\, \pitcher\nodeA =\, (\pitcher\adrA_{1} \ldots \pitcher\adrA_{k})$ and $\exists\, \cat\adrB_{j}  \in\, \cat\nodeB =\,  (\cat\adrB_{1} \ldots \cat\adrB_{m}) $  such that $\match(\pitcher\adrA, \cat\adrB)$, i.e., $(\pitcher\adrA_{i}, \cat\adrB_{j}) \in \match$.  

All references to  links or edges will denote unitary directed links.  Note, a unitary In-address may match more than one unitary Out-address.  And, a unitary Out-address may match many unitary In-addresses.  Hence, multiple links may form between Out-nodes and In-nodes. 

Nodes with no unitary Out-addresses are called {\em terminal nodes} and denoted by $\catm\nodeA\pitcherZero =\, ^{m}\nodeA^{0} =\, !\nodeA$. Nodes with no unitary In-addresses are called {\em inaccessible nodes} denoted by $\catZero\nodeA\pitcherk =\, ^{0}\nodeA^{k}$.  For example, the node $^{0}\nodeA^{0}$ is both inaccessible and terminal.  The simplest accessible Out-node is of the form $\cat\,^{1} \nodeA\pitcher\,^{1} =\,^{1}\nodeA^{1} =\,  (\cat\adrA\, | \,\pitcher\adrB)$
where an Out-node $\nodeA$ has only one unitary In-address $\cat\adrA$ and only one unitary Out-address $\pitcher\adrB$.   
Given a node $\nodeA$ in a network $N$ with a nonempty In-address $\catm\nodeA$ for $m > 0$, if there exists no node in $N$ with a matching Out-address, then the node is {\em inaccessible within} $N$.  Such nodes may be accessible to external networks or signals. 

\section{Ordered and unordered address combinatorics} 
Addresses in an addressing network are formed by combinations of unitary Out-addresses and combinations of basic In-addresses.  Generally, in combinatorics given $n$ units that form combinations, if the units are ordered, e.g., where $(\cat\adrA,\cat\adrB,\cat\adrC) \neq (\cat\adrB,\cat\adrA,\cat\adrC)$ then for address combinations of length $k$ there are $n^k$ possible combinations.  If the address combination are unordered, e.g., where $(\cat\adrA,\cat\adrB,\cat\adrC) = (\cat\adrB,\cat\adrA,\cat\adrC)$, then there are  
$\left( \begin{array}{c} n \\ k \end{array} \right) =  \frac{n!}{k!(n - k)!}$ possible unordered address combinations.  Since number of possible links in an encoded addressing network $N$ is bounded by the number of possible addresses,  the large numbers of both ordered and unordered address combinations appear to be sufficient to enable the generation of large, complex networks. However, we will show that in the case of bounded encoded addressing networks these seemingly ample address combinations are illusory based on mistaken implicit, combinatorial presuppositions.  

\section{Combinatoric limits of encoded addressing networks} 

There are fundamental combinatorial properties that can limit the control capacity of encoded networks. 

Let $N^{E}$ be a sequential encoding of a network $N$ in a language $L$.  
If $\nodeA$ is a node in $N$ then $\nodeA^{E}$ is its encoding in $N^{E}$.  The encoded address $\match$ relationships determine the encoded links between nodes.  

Assume there are a finite number $n$ of unitary Out-addresses, $\pitcher\adrB_{1}\ldots\pitcher\adrB_{n}$, encoded in the network $N^{E}$.  Assume that each unitary Out-address $\pitcher\adrB$ contained in the set of unitary Out-addresses $\mathbb{O}$ of $N$
is encoded only once in $N^{E}$.     Assume each encoded Out-node $\nodeA^{E}$  
in  $N^{E}$ has an encoded Out-address $\pitcher\nodeA =\, \pitcherk\nodeA$ consisting of a combination of at least $k \geq 1$ unitary Out-addresses.  We now show that given theses assumptions  there are at most $n/k$ encoded Out-address combinations of length $k$ in $N^{E}$. Hence, by definition, there are at most $n/k$ encoded Out-nodes $\nodeA$ in the encoded network $N^{E}$.

\begin{thm} If $N^{E}$ contains $n$ encoded Out-addresses and if there are no repeats of encoded unitary Out-addresses in $N^{E}$ and if each Out-node contains at least $k$ unitary Out-addresses, then the maximum number of Out-address combinations in an encoded network $N^{E}$ is $n/k$. 
\label{thm:OutAdr}\end{thm}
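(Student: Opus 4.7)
The plan is to prove the bound by a direct pigeonhole-style counting argument on the disjoint partition of the $n$ available unitary Out-addresses among the Out-nodes of $N^{E}$. The key idea is that because the hypothesis forbids repeats of unitary Out-addresses across $N^{E}$, the full Out-address of each Out-node must draw from a private, non-overlapping supply of unitary Out-addresses. This observation alone will drive the entire argument: each Out-node consumes at least $k$ tokens from a global pool of size $n$, so at most $n/k$ Out-nodes can exist.

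First I would fix notation by enumerating the Out-nodes of $N^{E}$ as $\nodeA_{1},\nodeA_{2},\dots,\nodeA_{M}$, with full Out-addresses $\pitcher\nodeA_{i}$ of length $k_{i}\geq k$ drawn from the finite pool $\{\pitcher\adrB_{1},\dots,\pitcher\adrB_{n}\}$. Next I would invoke the no-repeat assumption to conclude that the unitary Out-addresses occurring in $\pitcher\nodeA_{i}$ and those in $\pitcher\nodeA_{j}$ are disjoint whenever $i\neq j$, and that moreover no unitary Out-address appears twice within a single $\pitcher\nodeA_{i}$. Summing over all Out-nodes then yields the inequality
\[
\sum_{i=1}^{M} k_{i} \;\leq\; n,
\]
and combining with the lower bound $k_{i}\geq k$ gives $Mk\leq n$, i.e., $M\leq n/k$, which is exactly the asserted maximum.

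The step I expect to require the most care is the precise interpretation of the no-repeat assumption, since $\pitcher\nodeA$ is introduced as an ordered sequence of unitary Out-addresses rather than a set. I would need to justify explicitly that "no repeats in $N^{E}$" rules out both inter-node and intra-node duplication, so that the accumulated count $\sum_{i} k_{i}$ really is bounded above by $n$. Once this bookkeeping is settled, the rest reduces to a one-line arithmetic inequality, and the tightness of the $n/k$ bound is witnessed (when $k$ divides $n$) by partitioning the pool of $n$ unitary Out-addresses evenly into $n/k$ blocks, each forming the full Out-address of a distinct Out-node.
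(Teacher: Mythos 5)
Your proposal is correct and follows essentially the same route as the paper: the no-repeat hypothesis forces the unitary Out-addresses of distinct Out-nodes to be pairwise disjoint, so summing the lengths $k_{i}\geq k$ over all $M$ Out-nodes gives $Mk\leq\sum_{i}k_{i}\leq n$, hence $M\leq n/k$. Your version merely makes explicit the bookkeeping (the inequality $\sum_{i}k_{i}\leq n$ and the intra-node versus inter-node reading of ``no repeats'') that the paper's proof states informally.
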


\begin{proof}
Standard combinatorics assumes that the basic elements that form combinations can be repeated in combinations. Thus, normally it can be assumed that unitary addresses which are the elements that form address combinations can be repeated in those combinations. For example, $(\pitcher\adrA,\pitcher\adrB,\pitcher\adrC)$ and $(\pitcher\adrA,\pitcher\adrB,\pitcher\adrD)$ are different combinations. However, these combinations repeat both the unit $\pitcher\adrA$ and the unit $\pitcher\adrB$.  Under our assumption of no repeats of unitary Out-addresses, if $\pitcher\adrA$ and $\pitcher\adrB$ are encoded only once in an encoded network $N^{E}$ then there can be no encoding in $N^{E}$ of both combinations $(\pitcher\adrA,\pitcher\adrB,\pitcher\adrC)$ and $(\pitcher\adrA,\pitcher\adrB,\pitcher\adrD)$.  Hence, if $k$ is the minimum Out-address length of each node $\nodeA^{E}$ in $N^{E}$ and if $n$ is the total number of encoded unitary Out-addresses in $N^{E}$ then the encoded network $N^{E}$ contains at most $n/k$ encoded Out-address combinations. 
\end{proof}

\begin{cor} If $N^{E}$ contains $n$ encoded Out-addresses and if there are no repeats of encoded unitary Out-addresses in $N^{E}$ and if each Out-node contain at least $k$ unitary Out-addresses, then the maximum number of Out-nodes in an encoded network $N^{E}$ is $n/k$. 
\end{cor}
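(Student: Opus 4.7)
The plan is to derive the corollary directly from Theorem~\ref{thm:OutAdr} by observing a simple counting principle that links Out-nodes to their Out-address combinations. First I would record that, by the definition of an Out-node, every Out-node $\nodeA^{E}$ in $N^{E}$ carries at least one encoded full Out-address combination $\pitcher\nodeA$ of length at least $k$. Since by hypothesis each unitary Out-address is encoded exactly once in $N^{E}$, the sets of unitary Out-addresses attached to two distinct Out-nodes must be disjoint; in particular the combinations themselves must differ. This gives an injective assignment of Out-nodes to encoded Out-address combinations in $N^{E}$.

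Second, I would simply invoke Theorem~\ref{thm:OutAdr}, which already bounds the total number of encoded Out-address combinations in $N^{E}$ by $n/k$. Composing the injection of the previous step with this bound yields at most $n/k$ Out-nodes, which is exactly the corollary.

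The only subtlety is the case, explicitly allowed in the paper, where a single node carries more than one full Out-address. This does not weaken the bound: a node with several Out-address combinations consumes at least $k$ unitary Out-addresses for each of them, so it spends even more of the global budget of $n$ unitary Out-addresses. Hence the extremal configuration — in which the bound $n/k$ is attained — is the one in which every Out-node is assigned exactly one Out-address combination of minimum length $k$. I do not anticipate any genuine obstacle; the corollary is essentially a re-indexing of the theorem from ``combinations'' to ``nodes'' via the no-repeats assumption.
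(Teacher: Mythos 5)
Your proposal is correct and follows essentially the same route as the paper, which simply states that the corollary ``follows immediately from Theorem~\ref{thm:OutAdr} by definition of Out-node''; you have merely made explicit the injection from Out-nodes to their encoded Out-address combinations that the paper leaves implicit. Your additional remark about nodes carrying several full Out-addresses is a sensible clarification but does not change the argument.
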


\begin{proof}
Follows immediately from Theorem \ref{thm:OutAdr} by definition of Out-node.
\end{proof}

If $k = 1$ there can be at most $n$ encoded Out-addresses, and $n$ Out-nodes each with only a single unitary Out-address.  
The Out-nodes of $N^{E}$ are called the {\em control nodes} of the network $N$ because only Out-nodes can initiate and direct action. They form the fundamental {\em control backbone} of the network.  Thus, given the assumptions above, the number of possible effective control nodes in $N^{E}$ is $n/k$.  The  {\em control capacity} of an encoded network $N^{E}$ is a function of the number of control nodes, i.e., Out-nodes in the network.  While the number of Out-nodes puts no limits on the number of In-nodes, it puts severe restrictions on the possible control capacity of the network $N^{E}$.   Note, these results hold for any encoded addressing network, not just for gene regulatory networks (GRNs) discussed below. 

\section{Combinatoric limits of virtual addressing networks}
Relative to a set of encoded In-addresses, the Out-nodes and links defined by the encoded Out-addresses form the encoded portion of the network $N$ which we call the {\em primary encoded network} $N^{E}$.  The question is to what extent can network addresses and links be formed during the execution of the network.  {\em Virtual addresses} and {\em virtual links} are addresses and links that are not explicitly encoded in $N^{E}$ and are instead generated as the network is executed by the IES.  We now show that a virtual network generated by combinations of encoded addresses cannot extend the control capacity of the primary encoded network. 

A {\em virtual address} $\pitcher\nodeV$ in a network is combination of unitary addresses not explicitly encoded as a sequence in some Out-node in the network.  The {\em virtual network} generated by a network $N^{E}$ consists of those links $(\pitcher\nodeV \rightarrow\, \cat\nodeD^{E})$ where the Out-address combination $\pitcher\nodeV$ is virtual and it matches the In-address $\cat\nodeD^{E}$ of some encoded In-node $\nodeD^{E}$ in $N^{E}$.  Let $\pitcher\nodeV = (\pitcher \adrA_{1}, \ldots, \pitcher \adrA_{k})$ be any virtual Out-address that is not encoded directly in $N^{E}$. 

\subsection{Informal Proof}
By assumption each unitary Out-address   $\pitcher\adrA_{i}$  in  $\pitcher\nodeV$ occurs once and only once in some encoded node $\nodeB$ in $N^{E}$.  
To generate the virtual Out-address combination $\pitcher\nodeV$  each unitary Out-address  $\pitcher\adrA_{i}$  in  $\pitcher\nodeV$ must be called by some Out-node $\nodeA$.  Consider $\pitcher\adrA_{i}$.  To generate $\pitcher\adrA_{i}$ either it occurs directly, encoded in $\nodeA$ (where $\pitcher\adrA_{i}$ is in the Out-address combination $\pitcher\nodeA$) or $\pitcher\adrA_{i}$ occurs in some other node and has to be called by an address $\pitcher \adrD$ contained in $\nodeA$'s Out-address $\pitcher\nodeA$. If $\pitcher\adrA_{i}$ is encoded in $\nodeA$ it cannot occur anywhere else in $N$. If $\pitcher\adrA_{i}$ is not encoded in $\nodeA$ it has to be called by some Out-address $\pitcher \adrD$ that is encoded in $\nodeA$ and the Out-address $\pitcher \adrD$ matches an In-address encoded in $\cat(\pitcher \adrA)$. Assume the match is sufficient to activate $\pitcher\adrA_{i}$, e.g., using OR-addressing. Similarly, for any other unitary Out-address $\pitcher \adrA_{j}$ in $\pitcher\nodeV$, either $\pitcher\adrA_{j} \neq \pitcher\adrA_{i}$ is encoded in $\nodeA$ or it has to be called by $\nodeA$. If called and  $\pitcher\adrA_{j}$ has the same In-address for $\pitcher \adrD$ as $\pitcher\adrA_{i}$ where $\pitcher d$ matches both $\cat(\pitcher\adrA_{i})$ and  $\cat(\pitcher\adrA_{j})$ then the generation of $\pitcher \adrD$ will generate the both unitary Out-addresses $\pitcher\adrA_{i}, \pitcher\adrA_{j}$.  If $\pitcher\adrA_{j}$ has a different In-addresses from $\pitcher\adrA_{i}$, then some Out-address $\pitcher\adrE$ that matches $\cat(\pitcher\adrA_{j})$ has to be encoded in $\nodeA$ or generated by $\nodeA$. Hence, for each combination address $(\pitcher\adrA_{1}, \ldots, \pitcher\adrA_{k})$ generated by $\pitcher(\nodeA)$ in $N$ if a unitary sub-address $\pitcher\adrX$  within the combination address $\pitcher\nodeV = (\pitcher \adrA_{1}, \ldots, \pitcher \adrA_{k})$ is not encoded in $\pitcher\nodeA$, it has to be generated by $\pitcher\nodeA$ with call to the node that generates $\pitcher\adrX$. If the activating In-address $\cat(\pitcher\adrX)$ of $\pitcher\adrX$ is different from the other unitary Out-addresses $\pitcher\adrA_{i}$ in $\pitcher\nodeV$ then such a call requires at least one more Out-address $\pitcher\adrY$ that matches an In-address in $\cat(\pitcher \adrX)$ to activate and generate $\pitcher\adrX$.  

\subsection{Formal Proof}

\begin{thm} If $N^{E}$ contains no loops and no signaling,  if $N^{E}$ contains $n$ encoded unitary Out-addresses and if there are no repeats of encoded unitary Out-addresses in $N^{E}$ 
then if a virtual address $\pitcher\nodeV$ of length $k \geq 2$ is generated dynamically during the execution  of the network, then maximum number of virtual address combinations that can be generated by an encoded network $N^{E}$ is $n/k$ and $k \geq 2$. 
\label{thm:VirtualNet}\end{thm}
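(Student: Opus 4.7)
The plan is to reduce the virtual case to the encoded case of Theorem~\ref{thm:OutAdr} by a resource-accounting argument: each virtual Out-address combination of length $k$ must still ``consume'' at least $k$ distinct encoded unitary Out-addresses from the pool of $n$, so at most $n/k$ such virtual combinations can exist.

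First I would formalize the dynamic generation of a virtual address. Fix $\pitcher\nodeV = (\pitcher\adrA_{1}, \ldots, \pitcher\adrA_{k})$ with $k \geq 2$. Because each $\pitcher\adrA_{i}$ appears exactly once in $N^{E}$ (no-repeats hypothesis), there is a unique encoded Out-node, call it $\nodeB_{i}$, whose encoded Out-address sequence contains $\pitcher\adrA_{i}$. Since $\pitcher\nodeV$ is virtual, it is not itself encoded as the Out-address of any node, so the $\pitcher\adrA_{i}$ must be brought together at execution time. I would then argue that the only mechanism available (given ``no signaling'') is for some initiating Out-node $\nodeA$ to call each $\pitcher\adrA_{i}$ either directly, via its own encoded Out-address, or indirectly via a chain of matches on encoded In-addresses that terminates at $\nodeB_{i}$. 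The ``no loops'' hypothesis guarantees that each such activation chain is a finite acyclic path through $N^{E}$, so the argument can be carried out by induction on path length.

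The key counting step is to define, for each unitary Out-address $\pitcher\adrA_{i}$ in $\pitcher\nodeV$, an encoded ``witness'' $w(\pitcher\adrA_{i}) \in \mathbb{O}$ that must be present in $N^{E}$ to cause $\pitcher\adrA_{i}$ to fire during generation of $\pitcher\nodeV$. In the base case $w(\pitcher\adrA_{i}) = \pitcher\adrA_{i}$ itself (direct encoding). In the recursive case, $w(\pitcher\adrA_{i})$ is the unique encoded Out-address $\pitcher\adrD$ occurring in $\nodeA$ such that $\pitcher\adrD$ matches the In-address of the node $\nodeB_{i}$ carrying $\pitcher\adrA_{i}$. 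The crucial claim is that the map $w$ is injective across all $k$ entries of $\pitcher\nodeV$ and, more importantly, that these witnesses are not reusable across distinct virtual combinations. Injectivity within $\pitcher\nodeV$ follows because the $\pitcher\adrA_{i}$ are distinct (no-repeats) and because sharing a witness $\pitcher\adrD$ would only happen when an upstream $\pitcher\adrD$ triggers several $\pitcher\adrA_{i}$ simultaneously; but then those jointly-triggered addresses would already all be ``paid for'' by a single encoded slot, and the formal statement requires that each unitary address in $\pitcher\nodeV$ still ultimately trace back to its own unique encoded occurrence, which by hypothesis exists exactly once in $N^{E}$.

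With this witness map in hand, the proof concludes exactly as in Theorem~\ref{thm:OutAdr}: each virtual Out-address combination of length $k$ is charged at least $k$ encoded unitary Out-addresses drawn from the pool of $n$, and the no-repeat assumption prevents any unitary Out-address from being charged to two distinct combinations. Hence the total number of virtual Out-address combinations is at most $n/k$, and since virtual addresses are by definition at least of length $k \geq 2$, the bound strictly limits to $n/k$ with $k \geq 2$.

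The main obstacle I anticipate is the injectivity/no-sharing step for the witness map. One has to rule out the possibility that a clever use of matching (especially OR-style matching mentioned informally) lets a single encoded unitary Out-address simultaneously service many different virtual combinations without being counted multiple times. This is where the ``no loops'' and ``no signaling'' assumptions must do real work: loops would let an address re-enter the activation chain and effectively be reused, while signaling would allow external re-injection of addresses not counted in $n$. Formalizing that these two restrictions together force a strict tree-like activation that charges distinct encoded addresses to distinct virtual combinations is the technical core of the proof.
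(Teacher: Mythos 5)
Your core counting step --- charge each virtual combination of length $k$ at least $k$ encoded unitary Out-addresses from the pool of $n$, and invoke the no-repeats hypothesis to forbid sharing --- is exactly the argument of the paper's formal proof, which observes that the encoded Out-nodes jointly emitting $\pitcher\nodeV$ have pairwise disjoint Out-addresses (each unitary Out-address being encoded only once), so their contributions sum to at least $k$, whence at most $n/k$ virtual combinations. Where you diverge is in the accounting unit: the paper charges the $k$ members of $\pitcher\nodeV$ themselves, while your witness map charges the upstream addresses $\pitcher\adrD$ that activate them; that version corresponds to the paper's separate informal-proof section rather than to its formal proof, and it buys you an explicit model of dynamic generation at the cost of the extra injectivity lemma you correctly identify as the technical core.

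On that lemma: you are right to flag it, and you should know the paper does not close it either. Neither your witness map nor the paper's disjointness observation rules out that a single encoded unitary Out-address (encoded once, but activatable in different execution contexts) co-occurs with different partners in two distinct virtual combinations, e.g.\ $(\pitcher\adrA_{1},\pitcher\adrA_{2})$ and $(\pitcher\adrA_{1},\pitcher\adrA_{3})$, which would break the $n/k$ bound. The paper simply asserts that generating a virtual address of size $k$ ``uses up'' $k$ unitary addresses; the work you assign to ``no loops and no signaling'' is precisely the work the paper leaves implicit. So your proposal reproduces the paper's argument, including its unproved step, but is more candid about where the difficulty lies.
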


\begin{proof}

Let $\pitcher\nodeV =\,  \pitcher^{k}\nodeV  = (\pitcher\adrV_{1}, \ldots, \pitcher\adrV_{k})$ be any virtual Out-address that is not encoded directly in $N^{E}$.  By definition of virtual node, there is no encoded node $\nodeA$ in $N^{E}$ that contains all the unitary Out-addresses in $\pitcher\nodeV$.  Hence, it requires at least  $2$ and up to $k$ encoded Out-nodes $\nodeA_{1} \ldots \nodeA_{k}$ to generate a virtual combination $\pitcher^{k}\nodeV$  such that each encoded node contains a subset of the unitary Out-addresses in the virtual address combination. Assume, without loss of generality, that two Out-nodes, $\nodeA$ and $\nodeB$, generate $\pitcher^{k}\nodeV$. 

Given a virtual Out-address $\pitcher^{k}\nodeV$ is generated by two Out-nodes $\nodeA$ and $\nodeB$, let $\pitcher^{x}\nodeV_{A}$ be the sub-address sequence generated by $\nodeA$ and $\pitcher^{y}\nodeV_{B}$ be the sub-address sequence generated by $\nodeB$. Since by assumption unitary Out-addresses are only encoded once, $\pitcher\nodeA$ cannot intersect $\pitcher\nodeB$.  Hence,  $x + y \geq k$ and $\nodeA$ and $\nodeB$ together generate the full virtual address $\pitcher^{k}\nodeV =\, \pitcher^{k}\nodeV_{A,B}$.  Thus, the generation of a virtual address $\pitcher^{k}\nodeV$ of size $k$ uses up $k$ unitary addresses.  By assumption, there are at most $n$ unitary addresses available in the network $N$. By definition, virtual Out-address consists of at least two unitary Out-addresses. Therefore, there are at most $n/2$ virtual addresses  can be generated by any (simple -no loops, no signaling) encoded network $N$.  More generally, if each virtual address is of size $\geq k$, then at most $n/k$ virtual combinations can generated by a network of size $n$ and $k \geq 2$. 
\end{proof}

\section{Gene Regulatory Networks as addressing networks}

Gene Regulatory Networks (GRNs) consist of transcription factor genes (TF-genes) that generate transcription factor proteins (TF-proteins) that bind to cis promoters (TF-promoters) of genes resulting in their possible activation. 
If  TF-genes are mapped  to unitary Out-addresses and TF-promoters are mapped to unitary In-addresses, then Gene Regulatory Networks (GRNs) can viewed as instances of addressing networks.  Gene Regulatory Networks are encoded linearly in genomes. Thus, GRNs are instances of linearly encoded addressing networks. 

The encoded links between nodes in GRNs consist of {\em transcription factor genes} (TF-genes)  and their cis promoter sequences (TF-promoters) that bind and catch matching TF-proteins generated by TF-genes.  TF-promoters are normally associated with one or more genes which are activated once their cis TF-promoters is loaded.  Thus, GRNs are addressing networks where the nodes of the network are linked by addresses that match in some way.  Combinations of TFs form the addresses of GRNs. TF-promoters, denoted by $\tfc_{i}$, combine to form  the In-addresses of nodes in GRNs. TF-genes are denoted by $\tfg$. Individual TF-genes, denoted by $\tfg_{j}$, correspond to the unitary Out-addresses of addressing networks.   TF-promoters are denoted by $\tfc$.   A node $\nodeA$ in a GRN has the general form  $\nodeA = (\tfc_{1} \ldots \tfc_{m}, X, \tfg_{1} \ldots \tfg_{k})$ with $m \geq 0$ and $k \geq 0$.   $\cat\nodeA = (\tfc_{1} \ldots \tfc_{m})$  is the node's In-address or cis promoter site and consists of zero or more TF-promoters $\tfc_{i}$. 
The Out-address  $\pitcher\nodeA = (\tfg_{1} \ldots \tfg_{m})$ consists of zero or more the TF-genes $\tfg_{j}$. 
$X$ is a, possibly null, cell action-directive.  The simplest linking node in a GRN has the form $\nodeA = (\tfc \,|\, \tfg)$ with an In-address $\cat\nodeA = (\tfc)$  consisting of a single TF-promoter $\tfc$ and a unitary Out-address $\pitcher\nodeA = (\tfg)$ consisting of single TF-gene $\tfg$.  

\subsection{Cis Promoter Logic}
We use the term TF-promoter for both cis regulatory promoters, repressors and activators (see \cite{Latchman2005,Hughes2011}). The activation of a particular node  $\nodeA$ with promoter $\catk\nodeA$ will depend on its cis-regulatory logic \cite{Davidson2006, Carroll2005, Carroll2008, Carroll2011, Furlong2012}. If it has AND-logic then all sub-addresses $\tfc_{i} \in\, \catk\nodeA$ must be loaded by their matching TF-protein $\tfp_{i}$.  If it has OR-logic then only one of the sub-addresses $\tfc_{i}$ needs to be loaded to activate the gene.  The cis-regulatory logic can be quite complex such as a Boolean function, or a threshold logic function.  
Nor does it matter that there appears to be no canonical address relationship between TFs and their cis promoters.  The nature of the cis-regulatory activation logic is independent of the combinatorial proof since it does not depend on the activation logic nor on the execution of the network by the IES.  All that is needed for the proof is that TF-genes and TF-promoters are encoded in the genome and form links by some matching relationship.  

\subsection{Consequences: Size limits of GRN networks} 
Given there are at most 1,000 TF-genes in extant genomes, then if the In-addresses of gene promoters would require just $k = 1$ matching TFs, then there are at most 1,000  control nodes in a pure GRN.  Hence, there would be at most 1,000 links in the network.  For a binary decision tree would have a depth of at most $9$.  $2^{9} = 512$ has $2^{n+1} - 1 = 1023$ Out-nodes and  $2^{n+1} - 2 = 1022$ Out-addresses or links.  For a network that controls the movement, division and differentiation of billions of cells, a network with only 1,000 control nodes and a depth of between 1,000 for a linear control path and 9 for a binary tree control structure, cannot generate the complex output sequences necessary for space-time control of the embryonic development of complex multicellular organisms.  Hence, the traditional theories of development and evolution based on GRNs cannot be adequate.  They cannot explain the control of such complex dynamic processes and they cannot explain the evolution of complex multicellular organism. 

\section{Control capacity of networks}
Let $N_{\nodeA}^{*}$ be the set of possible paths through a network starting from a node $\nodeA$.   If viewed in terms of action sequences that the paths in $N_{\nodeA}^{*}$ generate then  $N_{\nodeA}^{*}$ is the extensional representation of the {\em action strategy} $\pi(N_{\nodeA})$ of the network where $\pi^{*}(N_{\nodeA}) = N_{\nodeA}^{*}$.  
The {\em control capacity} of a network $N$ relative to a start node $\nodeA$ is a function of the number, length and complexity of possible paths in $N_{\nodeA}^{*}$.  The {\em generative capacity} of a network $N$ relative to a start node $\nodeA$ is a function of the maximally complex output that a path in $N_{\nodeA}^{*}$ can generate.

\subsection{Limits of cis evolutionary capacity}

Adding cis-promoters does not increase network size or control capacity.
The current network based view of how organisms evolve is that the cis promoters of genes evolve, while transcription factor genes are evolutionarily conserved over hundreds of millions of years \cite{Carroll2008,Carroll2005,Davidson2002,Davidson2006,Furlong2012}.  In the language of addressing networks, transformations of gene regulatory networks are limited to changes in the In-addresses of nodes. Thus, pure cis promoter evolution is restricted to In-address evolution and, therefore, cannot increase network size and capacity\footnote{Critique: Unless there exist Out-nodes with no matching In-nodes. Then adding In-addresses to inaccessible In-nodes can change the network topology and extend its connected functional size.}

This limits evolution to changes in topology of the network without increasing its size or capacity. The topology of a network $N$ can be transformed when In-addresses are modified. In-address transformations can result in novel developmental phenotypes. 

\subsection{Evolutionary capacity defined}

A {\em 1st order address operator} $\alpha^{1}$  on an addressing network $N$ changes a unitary address of some node in $N$ without changing the number of nodes in $N$.  A {\em 2nd order node operator} $\alpha^{2}$ on an addressing network $N$ adds to or deletes nodes from $N$.  A {\em 2nd order Out-address operator} on $N$ adds to or deletes Out-nodes from $N$.  1st order address operators result in transformations of network topology leaving the number of nodes constant. Combinations of 1st order address and generative 2nd order (copy/delete/replace) operators result in network transformations of topology, growth, complexity and capacity\footnote{Question: How does evolutionary capacity relate to control capacity?}.  
 
Let the {\em cis evolutionary capacity} $\cisM(N)$  be the set of all possible networks that can be generated from a given network $N$ if only the In-addresses of nodes in $N$ are changed, i.e., if only 1st order In-address transformations are allowed while Out-addresses are unchanged and the number $n$ of Out-nodes remains constant.   

\subsection{Invariance of control capacity under cis-transforms}
Note, all networks $N_{i} \in \cisM(N)$ have the same set of unitary Out-addresses and Out-nodes.  If some Out-nodes in $N$ are inaccessible in $N$ they may become accessible in some transform $N^{T} \in \cisM(N)$ leading to a greater control capacity.  However, if all Out-nodes are accessible in $N$ then the control backbone of any cis transformed network $N^{T} \in \cisM(N)$ remains invariant.  Hence, the maximal control capacity of the network under cis transforms remains invariant. 

No cis-network (In-address network) resulting from In-address operators on $N$, however complicated, can increase the combinatorial address capacity on an encoded network $N^{E}$.  While there is no restriction on repeating In-addresses, the restriction on Out-address combinations limits the control capacity of the network.   Regardless of the number of cis promoter In-addresses one adds to the network, it does not increase the Out-node number of the network.  All transformations, additions, or deletions of In-addresses can do is change to links and thereby the topology of the network and change the sets of terminal nodes that are linked in.  While this can significantly change the behavior of the network, it does not change the control backbone.  Thus, its ability grow in complexity is limited by constant size of the control backbone.  It cannot reflect the complexity of control needed to generate the complexity of space-time events that occur in embryogenesis and evolution.  It cannot grow in complexity in response to evolutionary pressures. It fundamentally limits the evolutionary capacity of the organism. 

\subsection{Non-additive 1st order trans evolutionary capacity}
An {\em 1st order Out-address operator (mutation)} of a network $N$ changes the Out-address $\pitcher\nodeA$ of Out-nodes $\nodeA$ in $N$ where Out-address transforms of $\pitcher\nodeA$  include modification of a given unitary Out-address, unitary Out-address additions and deletions .  A  1st order Out-address operator 
is an Out-address transformation that is non-additive and leaves the number of Out-nodes unchanged. It does not add Out-nodes by adding Out-addresses to terminal nodes. 

Let the {\em 1st order Out-address Evolutionary Capacity} $\transM(N)$ be the set of all possible networks that can be generated from an addressing network $N$ if only the Out-addresses of Out-nodes in $N$ are changed, i.e., if only 1st order Out-address transformations of Out-nodes are allowed.  By definition, 1st order Out-address transforms are non-additive leaving the number of Out-nodes invariant because they leave the terminal nodes with empty Out-addresses unchanged.  

Any Out-address transform that stays within the address space of a network $N$, except for addition or subtraction, can simulated by a sequence of In-address transforms of $N$. 

Question: Are the (1st order, 2nd order) In-address network manifolds and Out-address network manifolds equivalent?

\subsection{Additive 2nd order trans evolutionary capacity}
A {\em 2nd oder Out-address transformation} of a network $N$ modifies the Out-address any node $\nodeA \in N$,  including terminal nodes with empty Out-addresses, changing, adding to or deleting unitary Out-address from $\pitcher\nodeA$.  

Let the {\em 2nd order trans evolutionary capacity} or {\em Generative Evolutionary Capacity} $\metaM(N)$ of a network $N$ be the set of all possible networks that can be generated if Out-nodes can be created and added to the network $N$ such that the network's control backbone can grow and additive 2nd order Out-address transformations are allowed. 

The developmental capacity of a network both enables and limits the possible complexity its output.  The developmental capacity is bounded by the its control capacity which is defined by the number of Out-nodes in the network. The evolutionary capacity of a network depends on what kinds of network mutations or transformations are allowed.  Pure 1st order cis (In-address) and 1st order trans (Out-address) transformations  place inherent limits on the evolution of developmental network capacity and corresponding output complexity because they do not increase the number of Out-nodes in the network.  The evolution of complex organisms only becomes possible with 2nd order additive trans (Out-address) transformations that create and link new Out-nodes into the network.  Addition of Out-nodes enables the evolution of increase in network size and complexity which, in turn, allows a corresponding increase in the developmental capacity of evolving addressing networks. 

\section{Conclusion}

Given no loops or cycles and no random generation of Out-addresses, if all unitary Out-addresses in a virtual combination $\pitcher\nodeV$ have have the same In-address by which they can be activated by the same unitary Out-address then an encoded network $N^{E}$ with $n$ encoded Out-nodes, can generate at most $n$ different virtual address combinations.  If any two unitary Out-addresses in $\pitcher\nodeV$ require activation by different unitary Out-addresses, then if the minimum length of any virtual Out-address $\pitcher\nodeV$ is at least $k$ then an encoded network $N^{E}$ with $n$ encoded Out-nodes, can generate at most $n/k$ different virtual Out-address combinations.  

Therefore, each encoded unitary Out-address $\pitcher\adrX$ in a virtual address combination $\pitcher(\nodeV)$ generated by $\pitcher\nodeA$ (where the virtual address is encoded elsewhere and not in $\pitcher\nodeA$) has to be generated by means of a new encoded Out-address $\pitcher\adrY$.  Since, by assumption unitary Out-addresses, whether in combinations or not, are only encoded once in $N^{E}$, then since address combinations $(\pitcher\adrA_{1}, \ldots, \pitcher\adrA_{k})$ use unitary addresses repeatedly, most address combinations are {\em virtual} and not explicitly encoded in $N^{E}$.  Therefore, virtual address combinations have to be generated as the network is executed.  By the proof above, any virtual combination $\pitcher(\nodeV) = (\pitcher\adrA_{1}, \ldots, \pitcher\adrA_{k})$ (i.e., not encoded explicitly in $N$) requires at least one and up to $k$ new Out-addresses $(\pitcher\adrX_{1}, \ldots, \pitcher\adrX_{k})$ that match the In-addresses $(\cat(\pitcher\adrA_{1}), \ldots, \cat(\pitcher\adrA_{k}))$ of that combination respectively.  However, if there are only $n$ unitary Out-addresses in  $N^{E}$, there can be at most $n/k$ Out-address combinations of length $k$ available in $N^{E}$.   Hence, if each unitary address in a virtual address combination requires a distinct In-address then the network $N^{E}$ cannot have more than $n/k$ Out-address combinations be they explicit or virtual.  At best if  of all unitary addresses in a virtual combination have the same In-address, then there can be at most $n$ distinct virtual address combinations. 

Hence, the virtual network $N_{V}$ that consists of non-encoded Out-addresses that have combination addresses that repeat unitary Out-addresses, cannot be greater than the encoded network $N^{E}$.  In other words, the encoded network $N^{E}$ cannot generate a more complex, larger virtual address space needed for a larger virtual network.  
This means that transcription factor networks (GRNs) cannot by themselves create a large virtual address space. 

If the arguments are correct, then genes cannot explain development or the evolution of metazoans.

\footnotesize 
\bibliographystyle{abbrv}
\bibliography{WernerTFlimitsInformal}

\end{document}